\theoremstyle{plain}
\newtheorem{theorem}{Theorem}[section]
\newtheorem{definition}{Definition}[section]
\newtheorem{example}{Example}[section]
\theoremstyle{definition}
\theoremstyle{remark}
\newcommand{\R}{\mathbb{R}}
\newcommand{\N}{\mathbb{N}}
\newcommand{\SI}{\mathcal{SI}}
\newcommand{\SIW}{\mathcal{SIW}}
\newcommand{\DP}{\mbox{\upshape DP}}
\newcommand{\PG}{\mbox{\upshape PG}}
\newcommand{\ShSh}{\mbox{\upshape SS}}
\newcommand{\CM}{\mbox{\upshape CM}}
\newcommand{\HCM}{\mbox{\upshape HCM}}
\renewcommand{\thefootnote}{\alph{footnote}}
\begin{document}
\renewcommand{\proofname}{{\normalfont\bfseries Proof}}
\title{\textbf{Mergeable weighted majority games and characterizations of some power indices}}
\author{Livino M. Armijos-Toro$^{a,b,e,}$\footnote{Corresponding author: E-mail address: livinomanuel.armijos@rai.usc.es} \and José M. Alonso-Meijide$^{c,e}$ \and Manuel A. Mosquera$^{d,e}$}

\footnotetext[1]{Department of Statistics, Mathematical Analysis, and Optimisation, Faculty of Mathematics, University of Santiago de Compostela, Campus Vida, 15782 Santiago de Compostela, Spain. ORCID: 0000-0001-8553-536X}
\footnotetext[2]{Departamento de Ciencias Exactas, Universidad de las Fuerzas Armadas ESPE, Sangolquí, Ecuador.}
\footnotetext[3]{MODESTYA Research Group, Department of Statistics, Mathematical Analysis, and Optimisation, Faculty of Sciences, University of Santiago de Compostela, Campus de Lugo, 27002 Lugo, Spain. ORCID: 0000-0002-8723-1404}
\footnotetext[4]{Universidade de Vigo, Departamento de Estat\'{i}stica e Investigaci\'{o}n Operativa, 32004 Ourense, Spain. ORCID: 0000-0002-4769-6119}
\footnotetext[5]{CITMAga, 15782 Santiago de Compostela, Spain.}

\date{}
\maketitle

\renewcommand{\thefootnote}{\arabic{footnote}}

\begin{abstract}
	In this paper, we introduce a notion of mergeable weighted majority games with the aim of providing the first characterization of the Colomer-Martínez power index \citep{Colomer1995}. Furthermore, we define and characterize a new power index for the family of weighted majority games that combines ideas of the Public Good \citep{Holler1982} and Colomer-Martínez power indices. Finally, we analyze the National Assembly of Ecuador using these and some other well-known power indices.		
\end{abstract}

\textbf{Keywords:} Weighted majority games; Mergeability; Power indices; Axiomatic characterizations.

\section{Introduction}

The most common way of expressing the will of a group of players regarding the choice of an option or candidate is by voting. Simple games are the appropriate models for studying social and political structures. Power indices could measure the player's influence on decisions made through voting.

It is necessary to introduce tools to measure the power of players as objectively as possible. This measure also has to provide information on the situation of the player within the whole set of players at the time of being part of winning coalitions. A winning coalition can enforce a proposal with which all its members agree. In many cases, a player's power index is based on the number of winning coalitions it belongs to.

The most widely used and studied power indices are the Shapley-Shubik \citep{Shapley1954} and the Banzhaf \citep{Banzhaf1965} power indices. The Shapley-Shubik power index considers all possible permutations (orderings) of all players. Each player is incorporated into the coalition formed by the players preceding it in the permutation. In each permutation, there is a critical player, i. e., a player who changes a losing coalition into a winning one. Considering a uniform distribution over the set of all possible permutations of all players, the Shapley-Shubik power index of a player is the probability that this player is critical. The Banzhaf power index is calculated similarly to the Shapley-Shubik power index, with the difference that the order in which each player joins the coalition is not relevant and, therefore, a uniform distribution over the set of coalitions is considered. The Banzhaf power index does not allocate the total power in the sense that the players' power allocations do not add up to 1. For this reason, it is usual to work with a normalized version of the Banzhaf power index.

For the definition of some power indices, such as the Deegan-Packel \citep{Deegan1978} or Public Good \citep{Holler1982} power indices, only minimal winning coalitions are considered. A minimal winning coalition is a winning coalition that becomes a losing coalition if any player leaves it. The size of the minimal winning coalitions containing a player is relevant to obtain the Deegan-Packel power index. However, the Public Good power index is determined by the number of minimal winning coalitions to which a player belongs, regardless of their size.

An interesting subclass of simple games is the family of weighted majority games. A simple game is a weighted majority game if there are a quota and a weight for each player such that a coalition wins if and only if the sum of the weights of the players in the coalition is greater than or equal to the quota. The National Assembly of Ecuador is an example of a weighted majority game. This assembly has an interesting feature: there is a voting block formed by smaller groups. Thinking of the classical random arrivals model of the Shapley-Shubik index, this would mean that this block arrives together. In that case, it makes sense that the probability assigned to a given group in the block is proportional to its number of members (weights). Therefore, for situations such as the National Assembly of Ecuador, it seems natural to consider power indices that somehow use the weights or, at least, use the contributions or power of each group in a block (or coalition) as proposed in \cite{Koczy2016}.

A power index defined over the class of weighted majority games that uses player weights to distribute total power is the Colomer-Martínez power index \citep{Colomer1995}. Like the Deegan-Packel power index, minimal winning coalitions play a relevant role in the Colomer-Martínez power index. The main difference is that the weight of each player also appears in its calculation. Another power index defined in the class of weighted majority games is the one proposed by \cite{Barua2005}. This power index also uses the weight of each player, but it needs all winning coalitions, unlike the Colomer-Martínez power index. When all players are necessary to make a decision (unanimity), the Colomer-Martínez power index and the power index defined in \cite{Barua2005} allocate power directly proportional to the weight of each player.

The wide range of existing power indices makes the choice of one of them dependent on the context in which it has to be applied. Therefore, studying the properties that each power index satisfies becomes crucial. The mergeability properties will play a fundamental role in this paper. Two simple games are mergeable if no minimal winning coalition of one of the games contains any minimal winning coalition of the other game. Therefore, the set of minimal winning coalitions of the union of mergeable simple games coincides with the union of the sets of minimal winning coalitions of the games. However, this mergeability property of simple games does not work for weighted majority games. A new operation for the union of weighted majority games must be provided to ensure that the resulting game is also a weighted majority game. Consequently, additional conditions have to be established for weighted majority games to be mergeable.
	
It is interesting to study which properties unambiguously characterize a power index, i.e., to provide a set of properties that only a given power index jointly satisfies. This set of properties would allow selecting one power index or another without looking at its formulas. Characterizations of all power indices mentioned above, except for the Colomer-Martínez power index, can be found in the literature. In this paper, the main objective is to provide the first characterization of the Colomer-Martínez power index. The characterization is mathematically compelling and uses well-known properties such as efficiency and null player. It also makes use of a mergeability property similar to that used in characterizations of other power indices. Moreover, the property of weighted symmetry is also necessary. This property requires weighted majority games that have only one minimal winning coalition and compares the power that receives two critical players in relative terms of their weights. A similar property is used to characterize the power index defined in \cite{Barua2005}.

Weighted majority games can refer to private or public goods.  \cite{Holler1983} argued that the Deegan-Packel power index is suitable for working with a private good, therefore the Colomer-Martínez power index, as a weighted version of the Deegan-Packel power index, will also be appropriate for working with a private good. On the other hand, \cite{Holler1983} also pointed out that the Public Good power index is suitable for working with a public good. However, this last power index does not consider the main characteristic of a weighted majority game: its weights. Then, it is interesting to define a power index suitable for working with public goods that considers the players' weights in its calculation. For this reason, the proposed power index uses ideas from both the Public Good and Colomer-Martínez power indices.

The paper is organized as follows. In Section \ref{sec:pre}, some basic definitions of simple games and weighted majority games are given. Section \ref{sec:merg} is devoted to the discussion of a new mergeability property for weighted majority games. In Section \ref{sec:colomer}, a characterization of the Colomer-Martínez power index is provided. In Section \ref{sec:HC}, a new power index for weighted majority games is proposed and characterized. In Section \ref{sec:ejem}, an application of these power indices to the National Assembly of Ecuador is presented. Finally, Section \ref{sec:conclusions} is devoted to conclusions and future work.

\section{Preliminaries}\label{sec:pre}

A \textit{cooperative game with transferable utility}, TU game for short, is a pair $(N,v)$ where $N$ is the set of players and $v: 2^N \longrightarrow \R$ is the characteristic function with $v(\emptyset)=0$. For each $S\subseteq N$, let $|S|=s$ be the cardinality of the set $S$. A TU game is said to be \textit{monotone} when $v(S)\leq v(T)$ for all $S\subseteq T \subseteq N$. The class of \textit{simple games} is the class of monotone TU games such that $v(N) = 1$ and, for each $S \subseteq N$, $v(S) = 0$ or $v(S) =1$. We denote by $\SI$ the class of simple games with an arbitrary set of players.

Let $(N,v)\in \SI$. A coalition $S\subseteq N$ is called a \textit{winning coalition} whenever $v(S)=1$ and is called a \textit{losing coalition} if $v(S)=0$. $W(v)$ denotes the set of winning coalitions of the game $(N,v)$. A coalition $S \subseteq N$ is a \textit{minimal winning coalition} if there does not exist $T\subsetneq S$ such that $T$ is a winning coalition. $M(v)$ denotes the set of minimal winning coalitions of the game $(N,v)$ and let $M_{i}(v) = \{S \in M(v) : i \in S \}$ for each $i\in N$. 

A simple game can represent a voting system. In these systems, it is common to require properness, i.e., the complement of a winning coalition is a losing coalition. Sometimes strongness it is also desirable, i.e., the complement of a losing coalition is a winning coalition. However, these properties are not required here to make this paper as general as possible.

Let $N$ be a set of players and let $S\subseteq N$. The game $(N,u_S) \in \SI$ where the characteristic function is defined, for each $T\subseteq N$, by
\[u_S(T)=\begin{cases}
	1 & \text{if } S\subseteq T\\
	0 & \text{otherwise}
\end{cases}\]
 is called the \textit{unanimity game of coalition $S$}. Note that $|M(u_{S})| = 1$. 

Let $N$ be a set of players and let $(N,v), (N,v') \in  \SI$. The \textit{union game} is the game $(N,v \vee v') \in \SI$ such that, for all $S \subseteq N$, $S \in W(v \vee v')$ if $S \in W(v)$ or $S \in W(v')$;  in other words, $(v \vee v')(S) = \max\{v(S),v'(S)\}$, for each $S\subseteq N$. The \textit{intersection game} is the game $(N,v \wedge v') \in \SI$ such that, for all $S \subseteq N$, $S \in W(v \wedge v') $ if $S \in W(v)$ and $S \in W(v')$; in other words $(v \wedge v')(S) = \min\{v(S),v'(S)\}$ for each $S\subseteq N$. 

Two games $(N,v), (N,v') \in  \SI$ are \textit{mergeable} if for all pair of coalitions $S,T \subseteq N$ such that $S \in M(v)$ and $T \in M(v')$, it holds that $S \nsubseteq T$ and $T \nsubseteq S$. Let us note that if $(N,v)$ and  $(N,v')$ are \textit{mergeable}, then $M(v)\cap M(v')=\emptyset$ and $M(v\vee v')=M(v)\cup M(v')$.

Let $(N,v) \in \SI$. A player $i \in N$ such that $M_{i}(v)= \emptyset$ is called a \textit{null player} in game $(N,v)$. Two players $i,j \in N$ are \textit{symmetric} in the game $(N,v)$ if, for all coalition $S \subseteq N \backslash \{i,j\}$ such that $S \notin W(v)$, $S \cup \{i\} \in W(v)$ if and only if $S \cup \{j\} \in W(v)$. For each player $i \in N$, the set of \textit{swings} of $i$, $\eta_{i}(v)$, is formed by the coalitions $S \subseteq N \backslash \{i\}$ such that $S \notin W(v)$ and $S \cup \{i\} \in W(v)$.

A game $(N,v) \in \SI$ is called a \textit{weighted majority game} if there exist a non-negative vector of weights, $\boldsymbol{w}=(w_{1},w_{2},...,w_{n})$, and a quota $q > 0$ such that $S \in W(v)$ if and only if $w_S \geq q$, where $w_S = \sum_{i \in S} w_{i}$. 

The set $\SIW(N)$ denotes the class of weighted majority games with `$N$ as the set of players. $\SIW$ denotes the class of weighted majority games with an arbitrary set of players. From now on, we identify a weighted majority game by the tuple $[q; w_{1}, w_{2}, ..., w_{n}]$, or $[q;\boldsymbol{w}]$ for short. Note that multiple vectors of weights with their corresponding quotas can result in the same weighted majority game.

The power indices are a solution concept for simple games. A \textit{power index} is a function $f$ which assigns an $n$-dimensional real vector $f(N,v)$ to a game $(N,v) \in \SI$, where, for each player $i\in N$, the $i$-th component of $f(N,v)$ can be interpreted as the power of player $i$ in the game $(N,v)$. 

Let $f$ be a power index on $\SI$. Some desirable properties are \citep[see][]{Alonso-Meijide2013}:

\begin{description}
   \item[EFF:] $f$ satisfies \textit{efficiency} if $\sum_{i \in N} f_{i}(N,v)=1$ for each game $(N,v) \in \SI$.
   \item[NP:] $f$  satisfies the \textit{null player} property  if  $f_{i}(N,v)=0$ for each game $(N,v) \in \SI$ and for each null player $i \in N$ in $(N,v)$.
   \item[SYM:] $f$ satisfies the \textit{symmetry} property if $f_{i}(N,v) = f_{j}(N,v)$  for each game $(N,v) \in \SI$ and for each symmetric players $i,j \in N$ in $(N,v)$.
	\item[TRA:] $f$ satisfies the \textit{transfer} property if $f(N,v \wedge v') +  f(N,v \vee v') = f(N,v) + f(N,v')$ for each pair of games $(N,v), (N,v') \in \SI$.
	\item[DPM:] $f$ satisfies the  \textit{\DP\ - mergeability} property if    
	\begin{equation*}
		f(N,v \vee v' ) = \frac{|M(v)| f(N,v) +|M(v')| f(N,v')}{|M(v \vee v')|} 
	\end{equation*}
	for each pair of mergeable games $(N,v), (N,v') \in \SI$.
	\item[PGM:] $f$ satisfies the \textit{\PG\ - mergeability} property if  
	\begin{equation*}
		f(N,v \vee v' ) = \frac{ f(N,v) \sum_{i \in N} |M_{i}(v) | + f(N,v') \sum_{i \in N} |M_{i}(v') | }{ \sum_{i \in N} |M_{i}(v \vee v')|} 
	\end{equation*}
	for each pair of mergeable games $(N,v), (N,v') \in \SI$. 
\end{description}

Next, some well-known power indices are presented. The \textit{Shapley-Shubik power index} \citep[see][]{Shapley1953,Shapley1954} assigns to each player $i \in N$ the arithmetic mean of the contributions that a player makes to the coalitions previously formed by other players in the $n!$ possible permutations of the players. Formally, for each $(N,v)\in \SI$ and for each $i\in N$,
\begin{equation*}
		\ShSh_{i} (N,v)= \sum_{S \in \eta_{i}(v)} \frac{(s-1)!(n-s)!}{n!}.
\end{equation*}
	
\cite{Dubey1975} characterized the Shapley-Shubik power index by means of EFF, NP, SYM, and TRA.

The \textit{Deegan-Packel power index} \citep{Deegan1978} only considers the coalitions $S \in M_{i}(v)$ for the calculation of the power of a player $i\in N$. This power index assumes that all minimal winning coalitions are equally likely to form and, for each minimal winning coalition, the power is equally divided among its members. Formally, for each $(N,v)\in \SI$ and for each $i\in N$,
\begin{equation*}
		\DP_{i} (N,v)= \frac{1}{|M(v)|} \sum_{S \in M_{i}(v)} \frac{1}{|S|}.
\end{equation*}

\cite{Deegan1978} characterized the Deegan-Packel power index by means of EFF, NP, SYM, and DPM.

The \textit{Public Good power  index} \citep{Holler1982} proposes a share of power proportional to the number of minimal winning coalitions in which each player participates. Formally, for each $(N,v)\in \SI$ and for each $i\in N$,
\begin{equation*}
		\PG_{i}(N,v) =  \frac{|M_{i}(v)|}{\sum_{j \in N} |M_{j}(v)|}.
\end{equation*}
	
\cite{Holler1983} characterized the Public Good power index by means of EFF, NP, SYM, and PGM.

\section{Mergeable weighted majority games}\label{sec:merg}

Note that the three characterizations mentioned in the section above only differ in one property and can be considered parallel. It is interesting to consider parallel characterizations since they reveal similarities and differences among power indices. Such a set of properties that characterize an index allows a researcher to compare a given one with others. Furthermore, they provide information to select the most suitable index depending on the problem to analyze. The transfer and mergeability properties are mathematical appealing and can be interpreted as the effect on a power index of a certain modification of the status of a minimal winning coalition \citep{Laruelle1999}. Thus, the choice of a power index depends on the importance given to minimal winning coalitions.

It is relevant to note that the union of simple games it is not useful for defining mergeability on the class of weighted majority games because the union game of two weighted majority games is sometimes not a weighted majority game. The following example shows it.

\begin{example}
	Let $N = \{1,2,3,4,5\}$, $[q;\boldsymbol{w}]=[13; 8,6,5,1,0]$ and $[q';\boldsymbol{w}']=[15; 1,2,5,10,10]$ be two weighted majority games, and $(N,v)$ and $(N,v')$ their associated simple games, respectively. It is clear that $M(v) = \{\{1,2\}, \{1,3\} \}$ and $M(v') = \{ \{3,4\}, \{3,5\}, \{4,5\} \}$. 
	
	Following the definition of union game for simple games, we get $(N,v\vee v')\in \SI$ such that $M(v \vee v') = \{\{1,2\}, \{1,3\} , \{3,4\}, \{3,5\}, \{4,5\}\}$. Next we show that $(N, v \vee v') $ is not a weighted majority game. 
	
	Assume that there exist weights $w^\vee_{i}$ and a quota $q^\vee$ such that $S \in W(v\vee v')$ if and only if $w^\vee(S) \geq q^\vee$. Then, since $\{1,2\}, \{3,4\} \in M(v \vee v')$,
	\begin{equation*}
			w^\vee_{1} + w^\vee_{2} + w^\vee_{3} + w^\vee_{4} \geq 2q^\vee,
	\end{equation*}
	but, since $\{1,4\}, \{2,3\} \not\in W(v \vee v')$ , then
	\begin{equation*}
		w^\vee_{1} + w^\vee_{4} + w^\vee_{2} + w^\vee_{3} < 2q^\vee,
	\end{equation*}
	which is a contradiction.
\end{example}

There are attempts to define an alternative version of the union game for subclasses of simple games, as in \cite{Barua2005}. Nevertheless, their definition can not be applied to weighted majority games since the resulting union game is sometimes not a weighted majority game.

Therefore, to characterize power indices for weighted majority games using mergeability properties, it is necessary to propose an alternative notion of the union game in the subclass of weighted majority games.

\begin{definition}
	Let $N$ be a set of players, let $p\in \N$ with $p>1$, and let $[q^1;\boldsymbol{w}^1],\ldots, [q^p;\boldsymbol{w}^p] \in \SIW(N)$. The WM-union game of weighted majority games is defined by $[q^\curlyvee;\boldsymbol{w}^\curlyvee] \in \SIW(N)$ where $ q^\curlyvee=\min_{k\in\{1,\ldots, p\}} q^k$ and $\boldsymbol{w}^\curlyvee=\max_{k\in\{1,\ldots, p\}} \boldsymbol{w}^k$.\footnote{If $\boldsymbol{x}^1,\ldots,\boldsymbol{x}^p\in \R^n$, then $\max_{k\in\{1,\ldots, p\}} \boldsymbol{x}^k=(\max_{k\in\{1,\ldots, p\}} x^k_1, \ldots, \max_{k\in\{1,\ldots, p\}} x^k_n)\in \R^n$.}
\end{definition}

Note that, the simple game $(N, v^\curlyvee)$ associated with  $[q^\curlyvee;\boldsymbol{w}^\curlyvee]$ can be different to the game $(N,v\vee v')$, where $(N,v)$ and $(N,v')$ are the simple games associated with $[q;\boldsymbol{w}]$ and $[q';\boldsymbol{w}']$, respectively. For instance, for the set of players $N=\{1,2,3\}$, if we take the weighted majority games $[5;1,2,3]$ and $[6;1,4,5]$, then $[q^\curlyvee;\boldsymbol{w}^\curlyvee]=[5;1,4,5]$. Nevertheless, $\{1,2\}\in M(q^\curlyvee;\boldsymbol{w}^\curlyvee)$ but $\{1,2\}\not\in M(v\vee v')$ since $v(\{1,2\})=0$ and $v'(\{1,2\})=0$ and then $(v\vee v')(\{1,2\})=0$.

The above example also shows that, unlike in the union between simple games, a minimal winning coalition in the WM-union game may not be a minimal winning coalition in any of its component games. For this reason, it is necessary to establish additional conditions for weighted majority games to be mergeable.

\begin{definition}\label{def:mer}
	Let $N$ be a set of players and let $p\in \N$ with $p>1$. It is said that $[q^1;\boldsymbol{w}^1],\ldots, [q^p;\boldsymbol{w}^p] \in \SIW(N)$ are WM-mergeable weighted majority games if 
	\begin{enumerate}
		\item $q^1=q^2=\cdots=q^p$, \label{def:mer_1}
		\item for each player $i \in N$ such that there exist $k,\ell\in\{1,\ldots,p\}$ with  $w^k_i\neq w^\ell_i$, then $w^k_i=0$ or $w^\ell_i=0$, \label{def:mer_2} 
		\item for all $S\subsetneq N$ such that $w^k_S<q^k$ for all $k\in \{1,\ldots,p\}$, then $\sum_{i\in S}\max_{k\in\{1,\ldots, p\}} w_i^k <\min_{k\in\{1,\ldots, p\}} q^k$, \label{def:mer_3}
		\item  $|M(q^\curlyvee;\boldsymbol{w}^\curlyvee)| = \sum_{k=1}^p |M(q^k;\boldsymbol{w}^k)|$. \label{def:mer_4} 		 
	\end{enumerate}
\end{definition}

The first two conditions of the above definition indicate that the quota has to be the same for all weighted majority games, and each player has to have the same weight in each weighted majority game or, failing that, have a weight equal to 0. The following condition states that if a coalition is a losing coalition in all weighted majority games, it is a losing coalition even if each player uses the highest possible weight among its weights in all the weighted majority games. The last condition is the condition for the mergeability of $p$ simple games.

A story that could support the above definition can be the following. Let us think of a parliament that has to take a vote. An absolute majority has to approve the vote, regardless of the number of members of parliament (MPs) present. For various reasons, it may be the case that some parliamentary groups cannot all attend the vote together. Therefore, to take the vote, several sub-votes must be taken with the same quota and in which not all the parliamentary groups are present. The attending parliamentary groups keep their weights (number of MPs), while the absent parliamentary groups have a weight equal to 0. To merge the sub-votes it is reasonable to require that if a coalition of parliamentary groups does not have the number of MPs necessary to approve any of the sub-votes, it also does not have the number of MPs necessary to approve the vote. Moreover, the number of minimal winning coalitions of the vote does not have to be modified.

Note that, the definition of WM-mergeability implies that the set of minimal winning coalitions of the WM-union game is equal to the disjoint union of the sets of minimal winning coalitions of the merged weighted majority games, i. e., $M(q^\curlyvee;\boldsymbol{w}^\curlyvee) = \bigcup_{k=1}^p M(q^k;\boldsymbol{w}^k)$ and $M(q^k;\boldsymbol{w}^k) \cap M(q^\ell;\boldsymbol{w}^\ell)=\emptyset$ for each pair $k,\ell\in\{1,\ldots,p\}$.

\begin{example}\label{ejem:wgame}
	Let $N = \{1,2,3\}$, $[q;\boldsymbol{w}]=[4;3,2,0]$ and $[q';\boldsymbol{w}']=[4;3,0,1]$ be two weighted majority games. It is clear that $M(q;\boldsymbol{w}) = \{\{1,2 \}\}$ and $M(q';\boldsymbol{w}') = \{\{1,3\}\}$, and that the WM-union game is $[q^\curlyvee;\boldsymbol{w}^\curlyvee]=[4;3,2,1]$. Since $M(q^\curlyvee;\boldsymbol{w}^\curlyvee)=\{\{1,2\},\{1,3\}\}$, $w_1=w'_1$, $w'_2=0$, and $w_3=0$, then  $[q;\boldsymbol{w}]$ and $[q';\boldsymbol{w}']$ are WM-mergeable weighted majority games.
\end{example}

Next, based on the DPM and PGM properties for simple games and WM-mergeability, two mergeability properties for power indices in the class of weighted majority games are proposed. Let $f$ be a power index for weighted majority games and let $N$ be a set of players,

\begin{description}
	\item [DPMw:] $f$ satisfies the \textit{\DP\ weighted mergeability} property if, for each $p\in \N$ with $p>1$ and $[q^1;\boldsymbol{w}^1],\ldots, [q^p;\boldsymbol{w}^p] \in \SIW(N)$ WM-mergeable weighted majority games, it holds that
	\begin{equation*}
		f(q^\curlyvee;\boldsymbol{w}^\curlyvee) = \frac{\sum_{k=1}^p |M(q^k;\boldsymbol{w}^k)| f(q^k;\boldsymbol{w}^k)}{|M(q^\curlyvee;\boldsymbol{w}^\curlyvee)|}. 
	\end{equation*}
	\item [HCMw:] $f$ satisfies the \textit{\HCM\ weighted mergeability } property if, for each $p\in \N$ with $p>1$ and $[q^1;\boldsymbol{w}^1],\ldots, [q^p;\boldsymbol{w}^p] \in \SIW(N)$ WM-mergeable weighted majority games, it holds that
		\begin{equation*}
		f(q^\curlyvee;\boldsymbol{w}^\curlyvee) =\frac{\sum_{k=1}^p f(q^k;\boldsymbol{w}^k)\sum_{i \in N}|M_{i}(q^k;\boldsymbol{w}^k)|w^k_{i}}{\sum_{i \in N} |M_{i}(q^\curlyvee;\boldsymbol{w}^\curlyvee)| w_{i}^\curlyvee}.
	\end{equation*}

\end{description}

Note that the DPMw and HCMw properties, like their DPM and PGM counterparts for simple games, say that the power index of the WM-union game is a weighted average of the power indices of its component weighted majority games. Although these properties are purely mathematical conditions, as \cite{Holler1983} pointed out it is not easy to provide a convincing story of their applicability. However, one can think of a parliament as above, where an absolute majority has to approve a vote which must be taken in sub-votes where not all parliamentary groups are present.

%attractive from a mathematical point of view, as \cite{Holler1983} pointed out, it is not easy to provide a convincing story of their applicability. However, one can think of a parliament as above, where an absolute majority has to approve a vote which must be taken in sub-votes where not all parliamentary groups are present. Then, it seems natural that the power allocation is obtained as a weighted sum of the power allocation in the sub-votes.

\section{Characterization of the Colomer-Martínez power index}\label{sec:colomer}

The main characteristic of weighted majority games is that there exist a quota and weights that define the simple game. Although different quotas and weights can provide the same simple game, it is interesting to use these characteristics to calculate a power index. Nevertheless, the most commonly used power indices for weighted majority games are those based on the associated simple games. The power index defined in \cite{Barua2005} and the Colomer-Martínez power index \citep{Colomer1995} are two exceptions. Both of them use weights for estimating the power of a player. The difference is that the Colomer-Martínez power index uses only minimal winning coalitions meanwhile the power index defined in \cite{Barua2005} uses winning coalitions. This last power index is characterized in \cite{Barua2005}, then this paper focuses on the Colomer-Martínez power index and provides its first characterization.

The Colomer-Martínez power index arises from the idea of building a Government Cabinet, which is why it is also known as the \emph{Execute Power index}. Its formal definition is as follows.

\begin{definition}
	Let  $[q;\boldsymbol{w}] \in \SIW$, the \textit{Colomer-Martínez power index} (\CM) is defined, for each $i\in N$, by
	\begin{equation*}
		\CM_{i}(q;\boldsymbol{w}) =  \frac{1}{|M(q;\boldsymbol{w})|} \sum_{S \in M_{i}(q;\boldsymbol{w})} \frac{w_{i}}{w_S}.
	\end{equation*}

\end{definition}

The Colomer-Martínez power index is an estimation of the expected power of each player in a minimal winning coalition. The power of a player within a coalition is measured as its contribution to the sum of the weights of the players in that coalition. This definition captures the fact that the power of a player will be greater when it joins minimal winning coalitions with players of lesser weight than it. 

Note that the Colomer-Martínez power index is a weighted version of the Deegan-Packel power index in the sense that if we consider a weighted majority game where all weights are equal both power indices coincide. Next example shows the calculation of the Colomer-Martínez power index.

\begin{example}\label{ejem:cmi}
	Let $N = \{1,2,3,4\}$, $[q;\boldsymbol{w}]=[51; 50, 46, 4, 1]\in \SIW(N)$. It is easy to check that $M(q;\boldsymbol{w}) = \{\{1,2\}, \{1,3\}, \{1,4\}, \{2,3,4\}\}$ and then $|M(q;\boldsymbol{w})| = 4$. Moreover, $M_{1}(q;\boldsymbol{w}) = \{\{1,2\},\{1,3\}, \{1,4\}\}$, $M_{2}(q;\boldsymbol{w}) = \{\{1,2\}, \{2,3,4\}\}$, $M_{3}(q;\boldsymbol{w}) = \{\{1,3\}, \{2,3,4\}\}$, and $M_{4}(q;\boldsymbol{w}) = \{\{1,4\}, \{2,3,4\}\}$. 
	
	The Colomer-Martínez power index is 
	\begin{eqnarray*}
		\CM_{1}(q;\boldsymbol{w}) & = & \frac{1}{4} \left(\frac{50}{50 + 46} + \frac{50}{50 + 4} + \frac{50}{50 + 1}\right) \approx 0.6068, \\
		\CM_{2}(q;\boldsymbol{w}) & = & \frac{1}{4} \left(\frac{46}{50 + 46} + \frac{46}{46 + 4 + 1}\right) \approx 0.3453,                 \\
		\CM_{3}(q;\boldsymbol{w}) & = & \frac{1}{4} \left(\frac{4}{50 + 4} + \frac{4}{46 + 4 + 1} \right)\approx 0.0381,                   \\
		\CM_{4}(q;\boldsymbol{w}) & = & \frac{1}{4} \left(\frac{1}{50 + 1} + \frac{1}{46 + 4 + 1} \right)\approx 0.0098.
	\end{eqnarray*}
	Therefore, 
	\begin{equation*}
		\CM(q;\boldsymbol{w}) \approx (0.6068, 0.3453, 0.0381, 0.0098).
	\end{equation*}
\end{example}

It is easy to check that, in Example \ref{ejem:cmi}, the Shapley-Shubik, Deegan-Packel, and Public Good power indices assign the same power to players $2, 3$, and $4$. The reason is that these power indices only consider whether or not the players belong to (minimal) winning coalitions. Nevertheless, it seems natural that if players have different weights, then they should have different power. The Colomer-Martínez power index takes this fact into account and provides a power distribution reflecting the weights of the players.

Next, the first characterization of the Colomer-Martínez power index is provided. For this purpose, some well-known properties adapted to the class of weighted majority games are used, such as EFF and NP. A new weighted symmetry property is proposed. Let $f$ be a power index for weighted majority games,

\begin{description}
	\item [SYMw:] $f$ satisfies the \textit{weighted symmetry} property  if 
		\begin{equation*}
			\frac{f_{i}(q;\boldsymbol{w})}{f_{j}(q;\boldsymbol{w})} = \frac{w_{i}}{w_{j}} 
		\end{equation*}
		for each game $[q;\boldsymbol{w}] \in \SIW$ such that $M(q;\boldsymbol{w}) =\{S\}$ for some $S\subseteq N$, and for all players  $i,j \in S$.
\end{description}

The weighted symmetry property states that, in a weighted majority game where $S$ is the only minimal winning coalition, i.e., in the unanimity game of coalition $S$, the ratio between the power assigned to each pair of players in $S$ has to be the same as the ratio between their weights. A property similar to weighted symmetry is used to characterize the power index defined in \cite{Barua2005}, but SYMw property is weaker since it is only established for unanimity games and players in the minimal winning coalition. 

The following theorem characterizes the Colomer-Martínez power index.

\begin{theorem}\label{teo:cmi}
	The unique power index $f$ on $\SIW$ that satisfies EFF, NP, SYMw, and DPMw is the Colomer-Martínez power index.
\end{theorem}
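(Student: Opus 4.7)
My plan is to split the proof into two parts: that $\CM$ satisfies the four axioms (existence), and that any power index satisfying them equals $\CM$ (uniqueness). For existence, EFF, NP, and SYMw are immediate from the defining formula of $\CM$. The delicate axiom is DPMw. Here the key observation is that if $S\in M(q^k;\boldsymbol{w}^k)$ and $i\in S$, then $i$ is critical in $S$, so $w_i^k>0$; condition (ii) of Definition \ref{def:mer} then forces $w_i^\curlyvee=w_i^k$ for every $i\in S$, and consequently $w_S^\curlyvee=w_S^k$. Combined with the fact that $M(q^\curlyvee;\boldsymbol{w}^\curlyvee)$ is the disjoint union of the $M(q^k;\boldsymbol{w}^k)$, the DPMw identity for $\CM$ drops out of a direct rearrangement of the defining sum.

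For uniqueness, let $f$ satisfy the four axioms. The argument has two steps. \emph{Step 1: games with a unique minimal winning coalition.} Suppose $M(q;\boldsymbol{w})=\{S\}$. Every $i\notin S$ is null, so $f_i(q;\boldsymbol{w})=0$ by NP; every $i\in S$ has $w_i>0$ (otherwise $S\setminus\{i\}$ would already be winning, contradicting minimality). The equations $f_i/f_j=w_i/w_j$ given by SYMw, together with $\sum_{i\in S}f_i=1$ from EFF, force $f_i=w_i/w_S=\CM_i(q;\boldsymbol{w})$. \emph{Step 2: general case.} Given $[q;\boldsymbol{w}]$ with $M(q;\boldsymbol{w})=\{S_1,\ldots,S_m\}$, $m\ge 2$, I construct $m$ sub-games $[q;\boldsymbol{w}^k]$ as follows: set $w_i^k=w_i$ if $i\in S_k$, $w_i^k=0$ if $i$ is a non-null player of $[q;\boldsymbol{w}]$ not in $S_k$, and for each null player $i$ of $[q;\boldsymbol{w}]$ set $w_i^1=w_i$ and $w_i^k=0$ for $k\ge 2$. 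Using that every minimal winning coalition of the original game lies in $N^*=\bigcup_\ell S_\ell$ and the minimality of each $S_\ell$, one verifies that $M(q;\boldsymbol{w}^k)=\{S_k\}$ for every $k$; the four clauses of WM-mergeability are then routine, and by construction $[q^\curlyvee;\boldsymbol{w}^\curlyvee]=[q;\boldsymbol{w}]$. Applying DPMw to both $f$ and $\CM$ and invoking Step 1 for each $f(q;\boldsymbol{w}^k)$ yields
\[
f(q;\boldsymbol{w})=\frac{1}{m}\sum_{k=1}^m \CM(q;\boldsymbol{w}^k)=\CM(q;\boldsymbol{w}).
\]

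The main obstacle is handling null players of $[q;\boldsymbol{w}]$ that carry a positive weight. A naive decomposition into games supported only on the $S_k$'s would produce $[q^\curlyvee;\boldsymbol{w}^\curlyvee]$ with those weights collapsed to $0$, so $[q^\curlyvee;\boldsymbol{w}^\curlyvee]\ne[q;\boldsymbol{w}]$ as weighted majority games and DPMw would not transfer $f(q;\boldsymbol{w})$ to the right-hand side. Parking all null-player weights into a single sub-game repairs this; the delicate point is then to check that $|M(q;\boldsymbol{w}^1)|=1$, which relies on the fact that combinations of $S_1\setminus\{i\}$ with null players cannot be winning in the original game (and hence not in the first sub-game either), since such a coalition contains no minimal winning coalition of $[q;\boldsymbol{w}]$.
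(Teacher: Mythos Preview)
Your proof is correct and follows the same architecture as the paper's: verify the four axioms for $\CM$ (with a careful justification of DPMw via condition~\ref{def:mer_2} of WM-mergeability), then for uniqueness pin down $f$ on games with a single minimal winning coalition using NP, SYMw, and EFF, and finally decompose an arbitrary $[q;\boldsymbol{w}]$ into WM-mergeable single-coalition sub-games and apply DPMw.

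The only divergence is in how you treat null players in the decomposition. You park all null-player weights into the first sub-game only, which works but creates the asymmetry your last paragraph is devoted to repairing. The paper instead sets $w_i^k=w_i$ for every null player $i$ in \emph{every} sub-game $k$ (that is, $w_i^k=w_i$ whenever $i\in S_k$ or $M_i(q;\boldsymbol{w})=\emptyset$, and $w_i^k=0$ otherwise). This still yields $[q^\curlyvee;\boldsymbol{w}^\curlyvee]=[q;\boldsymbol{w}]$, and condition~\ref{def:mer_2} is then trivially satisfied for null players since their weight is identical across all sub-games. The argument that $M(q;\boldsymbol{w}^k)=\{S_k\}$ is exactly the one you give for $k=1$, now applied uniformly. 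So the obstacle you correctly identified and resolved in your final paragraph admits a cleaner fix that removes the special role of the first sub-game altogether.
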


\begin{proof} 
\textit{Existence}.
First, it is shown that the Colomer-Martínez power index satisfies EFF, NP, SYMw, and DPMw. 

After some algebra, it is clear that the Colomer-Martínez power index satisfies EFF and NP. Moreover, let $[q;\boldsymbol{w}]\in \SIW(N)$ be such that $M(q;\boldsymbol{w}) =\{S\}$ for some $S\subseteq N$ then, for each pair $i,j \in S$ it holds that,
\[
	\frac{\CM_{i}(q;\boldsymbol{w})}{\CM_{j}(q;\boldsymbol{w})}  = \frac{\dfrac{w_{i}}{w_S}}{\dfrac{w_{j}}{w_S}} = \frac{w_{i}}{w_{j}},
\]	                            

and thus the Colomer-Martínez power index also satisfies SYMw.

Next, it is proven that the Colomer-Martínez power index satisfies DPMw. Let  $p\in \N$ with $p>1$ and let $[q^1;\boldsymbol{w}^1],\ldots, [q^p;\boldsymbol{w}^p] \in \SIW(N)$ be WM-mergeable weighted majority games. Then, for each $i\in N$,
\begin{eqnarray*}
	\CM_{i} (q^\curlyvee;\boldsymbol{w}^\curlyvee) & = & \frac{1}{|M(q^\curlyvee;\boldsymbol{w}^\curlyvee)|} \sum_{S \in M_{i}(q^\curlyvee;\boldsymbol{w}^\curlyvee)} \frac{w_{i}^\curlyvee}{w_S^\curlyvee}                                                                \\
	                         & = & \frac{1}{|M(q^\curlyvee;\boldsymbol{w}^\curlyvee)|}  \sum_{k=1}^p  \left[\sum_{S \in M_{i}(q;\boldsymbol{w}^k)} \frac{w^k_{i}}{w^k_S}\right] \\
	                         & = & \frac{\sum_{k=1}^p |M(q^k;\boldsymbol{w}^k)| \CM_{i}(q^k;\boldsymbol{w}^k)}{|M(q^\curlyvee;\boldsymbol{w}^\curlyvee)|},
\end{eqnarray*}
where the second equality follows from the definition of WM-mergeability. 

\textit{Uniqueness}. 
Now, it is shown that the Colomer-Martínez power index is the unique power index that satisfies EFF, NP,  SYMw, and  DPMw properties. 

Let $f$ be a power index on $\SIW$ satisfying all these properties. First, it is proven that $f$ is equal to the Colomer-Martínez power index for weighted majority games with only one minimal winning coalition. Let $N$ be a set of players and let $[q;\boldsymbol{w}]\in \SIW(N)$ be such that $M(q;\boldsymbol{w}) =\{S\}$ for some $S\subseteq N$. It is clear that, for each $i\not\in S$, $i$ is a null player and then, by the NP property it holds that
\begin{equation}\label{thm:uniq1}
	f_i(q;\boldsymbol{w})=0=\CM_i(q;\boldsymbol{w}).
\end{equation}  

Take $i \in S$. By the SYMw property, 
\begin{equation}\label{thm:uniq1_1}
	f_{j}(q;\boldsymbol{w}) = \frac{w_{j}}{w_{i}} f_{i}(q;\boldsymbol{w}) \text{ for all }j \in S.
\end{equation}

Then, by equations (\ref{thm:uniq1}), (\ref{thm:uniq1_1}) and by EFF property:
\begin{equation*}
	1=\sum_{j \in N} f_{j}(q;\boldsymbol{w})=\sum_{j \in S} \frac{w_{j}}{w_{i}} f_{i}(q;\boldsymbol{w}) = \frac{f_{i}(q;\boldsymbol{w})}{w_i}w_S.
\end{equation*}
Therefore, since  $M(q;\boldsymbol{w})=\{S\}$, for each $i\in S$, 
\begin{equation}\label{thm:uniq2}
	f_i(q;\boldsymbol{w}) = \dfrac{w_i}{w_S} =  \CM_i(q;\boldsymbol{w}).
\end{equation}

Now, let $N$ be a set of players and let $[q;\boldsymbol{w}]\in \SIW(N)$ be such that $M(q;\boldsymbol{w}) =\{S_{1}, S_{2},..., S_{m}\}$ with $m > 1$. Let, for all $k\in\{1,\ldots,m\}$, $[q;\boldsymbol{w}^k]\in \SIW(N)$ be such that, for each $i\in N$, 
\begin{equation*}
	w^k_i=\begin{cases}
		w_i & \text{if } i\in S_k \text{ or } M_i(q;\boldsymbol{w})=\emptyset\\
		0 & \text{otherwise.}
	\end{cases}
\end{equation*}
It is easy to check that $M(q;\boldsymbol{w}^k) = \{S_k\}$. Moreover, for all $k\in \{1,\ldots,m\}$, the weighted majority games $[q;\boldsymbol{w}^k]$ are WM-mergeable and their WM-union game $[q^\curlyvee;\boldsymbol{w}^\curlyvee]$ is such that
\begin{eqnarray*}
	q^\curlyvee & = & \min\{q,q,...,q\}= q  \\
	\boldsymbol{w}^\curlyvee & = & \max_{k\in\{1,\ldots, m\}}\{\boldsymbol{w}^k\}= (\max\{w^1_1, \ldots ,w^m_1\},\ldots,\max\{w^1_n, \ldots ,w^m_n\})= (w_{1}, w_{2},..., w_{n})
\end{eqnarray*}
i. e., $[q^\curlyvee;\boldsymbol{w}^\curlyvee]=[q;\boldsymbol{w}]$.

Therefore, for each $i\in N$,
\begin{eqnarray*}
	f_{i}(q;\boldsymbol{w}) & = & f_{i}(q^\curlyvee;\boldsymbol{w}^\curlyvee)                                                      \\
	           & = & \frac{\sum_{k = 1}^{m} |M(q;\boldsymbol{w}^k)| f_{i}(q;\boldsymbol{w}^k)}{|M(q^\curlyvee;\boldsymbol{w}^\curlyvee)|} \\
	           & = & \frac{\sum_{k = 1}^{m} |M(q;\boldsymbol{w}^k)| \CM_{i}(q;\boldsymbol{w}^k)}{ |M(q^\curlyvee;\boldsymbol{w}^\curlyvee)|}       \\
	           & = & \CM_{i}(q^\curlyvee;\boldsymbol{w}^\curlyvee)\\
	           & = & \CM_{i}(q;\boldsymbol{w}),
\end{eqnarray*}
where the second equality follows from applying the DPMw property of $f$, the third one is a consequence of (\ref{thm:uniq1}) and (\ref{thm:uniq2}) since, for each $k\in \{1,\ldots,m\}$,  $[q;\boldsymbol{w}^k]$ are weighted majority games with only one minimal winning coalition, and the fourth equality follows from applying the DPMw property of the Colomer-Martínez power index.

Therefore, the Colomer-Martínez power index is the only power index that satisfies the properties EFF, NP, SYMw, and DPMw.
\end{proof} 

Next, the logical independence of the properties used in Theorem \ref{teo:cmi} is shown.

\begin{description}
	\item [EFF:] Let $f$ be the power index defined, for all $[q; \boldsymbol{w}] \in \SIW$ and $i \in N$,	by
	\begin{equation*}
		f_{i}(q; \boldsymbol{w}) =  2\cdot \CM_i(q; \boldsymbol{w}).
	\end{equation*}
	
	It is clear that $f$ satisfies NP, SYMw, and DPMw, since the Colomer-Martínez power index satisfies them, but $f$ does not satisfy EFF since, for all	$[q; \boldsymbol{w}] \in \SIW$,  it holds that $\sum_{i \in N} f_{i}(q; \boldsymbol{w})=  \sum_{i \in N} 2\cdot \CM_i(q; \boldsymbol{w})	= 2$, 
	since the Colomer-Martínez power index satisfies EFF.
	
	\item [NP:] Let $f$ be the power index defined, for all $[q; \boldsymbol{w}] \in \SIW$ and $i \in N$, by
	
	\begin{equation*}
		f_{i}(q; \boldsymbol{w}) =  \begin{cases}
			\frac{1}{3} & \text{if } [q; \boldsymbol{w}]=[4;2,2,1],\\
			\CM_i(q; \boldsymbol{w})& \text{otherwise.}
		\end{cases}
	\end{equation*}

	It is clear that $f$ satisfies the EFF property, since the Colomer-Martínez power index satisfies it. $f$ also satisfies the SYMw property. Namely, if $[q; \boldsymbol{w}]=[4;2,2,1]$, then $M(q; \boldsymbol{w}) = \{\{1,2\}\}$ and, for all $i,j \in \{1,2\}$ it holds that		
	\[
	\frac{f_{i}(4;2,2,1)}{f_{j}(4;2,2,1)} =  \frac{\frac{1}{3}}{\frac{1}{3}} = 1 = \frac{w_{i}}{w_{j}}. 
	\]	
	On the other hand, for all $[q; \boldsymbol{w}]\neq [4;2,2,1]$, since $f(q; \boldsymbol{w})=\CM(q; \boldsymbol{w})$ and the Colomer-Martínez power index satisfies SYMw, it holds that
	\[
	\frac{f_{i}(q; \boldsymbol{w})}{f_{j}(q; \boldsymbol{w})} =  \frac{w_{i}}{w_{j}}. 
	\]
	
	Moreover, $f$ satisfies the DPMw property. It is clear that $[4;2,2,1]$ cannot be the WM-union game of weighted majority games since it has only one minimal winning coalition. Moreover, $[4;2,2,1]$ may also not be WM-mergeable with other weighted majority games since any WM-union of weighted majority games where $[4;2,2,1]$ is one of the games to be merged turns out to be the weighted majority game $[4;2,2,1]$ itself. Then, $f$ satisfies DPMw since the Colomer-Martínez power index satisfies it.
	
	Nevertheless, $f$ does not satisfy NP since $3$ is a null player of the weighted majority game $[4;2,2,1]$, but 
\[
f_3(4;2,2,1) = \frac{1}{3} \neq 0.
\]

	\item [SYMw:] The Deegan-Packel power index satisfies EFF, NP, and DPM, as shown in \cite{Deegan1978}. Moreover, if several weighted majority games are WM-mergeable, their associated simple games are mergeable (with the definition for simple games). Hence, by DPM, the Deegan-Packel power index satisfies DPMw. However, it does not hold SYMw, or else Theorem \ref{teo:cmi} fails.
	
	\item [DPMw:] It will be shown in the next section that \HCM\ power index satisfies EFF, NP, and SYMw. Nevertheless, it does not satisfy DPMw as it will be shown in Example \ref{ejem:HCM_DPMw}. 	
\end{description}

\section{The Holler-Colomer-Martínez power index}\label{sec:HC}

In this section a new power index for weighted majority games where the weights of the players are directly involved is proposed. \cite{Holler1983} argued that the Deegan-Packel power index is suitable for working with a private good. Therefore, the Colomer-Martínez power index, as a weighted version of the Deegan-Packel power index, will also be appropriate for working with a private good. There are situations in which weighted majority games refer to public goods and \cite{Holler1983} proposed to use the Public Good power index. However, this power index does not consider the weights in weighted majority games. For this reason, it is interesting to define a power index that is suitable for working with public goods and that considers the weights of the players in its calculation. This new power index combines the Public Good and Colomer-Martínez power indices and will be called the Holler-Colomer-Martínez power index (\HCM\ power index, for short). 

For the calculation of the \HCM\ power index, the contribution of each player to the total number of minimal winning coalitions to which it belongs is considered, like the Public Good power index, and also the weights of each player, like the Colomer-Martínez power index. Next, the formal definition of the \HCM\ power index and a characterization using a mergeability property are provided.

\begin{definition}
	Let  $[q;\boldsymbol{w}] \in \SIW$, the \textit{\HCM\ power index} is defined, for each $i\in N$, by
	\begin{equation*}
		\text{\HCM}_{i}(q;\boldsymbol{w}) =   \frac{|M_{i}(q;\boldsymbol{w})|w_i}{\sum_{j \in N} |M_{j}(q;\boldsymbol{w})|w_j}. 
	\end{equation*}
\end{definition}

The \HCM\ power index calculates the power of each player as the proportion, weighted by the weights in the game, of the number of minimal winning coalitions to which it belongs. Alternative expressions for the \HCM\ power index are, for each $i\in N$,
\begin{equation*}
	\text{\HCM}_{i}(q;\boldsymbol{w}) = \frac{\sum_{T \in M_{i}(q;\boldsymbol{w})} w_{i}}{\sum_{j \in N} \sum_{S \in M_{j}(q;\boldsymbol{w})} w_{j}} = \sum_{T \in M_{i}(q;\boldsymbol{w})}\frac{ w_{i}}{\sum_{S \in M(q;\boldsymbol{w})} w_S}.
\end{equation*}

Note that the \HCM\ power index is also a weighted version of the Public Good power index in the sense that if we consider a weighted majority game with all weights equal both power indices coincide. The following example shows the calculation of the \HCM\ power index for a weighted majority game.

\begin{example}\label{ejem:HC}
	Considering the weighted majority game proposed in Example \ref{ejem:cmi}. Recall that $M_{1}(q;\boldsymbol{w}) = \{\{1,2\},\{1,3\}, \{1,4\}\}$, $M_{2}(q;\boldsymbol{w}) = \{\{1,2\}, \{2,3,4\}\}$, $M_{3}(q;\boldsymbol{w}) = \{\{1,3\}, \{2,3,4\}\}$, and $M_{4}(q;\boldsymbol{w}) = \{\{1,4\}, \{2,3,4\}\}$, then 
	\begin{eqnarray*}
		\text{\HCM}_{1}(q;\boldsymbol{w}) &=& \frac{3\cdot 50}{3\cdot 50 + 2\cdot 46 + 2\cdot 4 + 2\cdot 1} = \frac{150}{252} \approx 0.5953, \\
		\text{\HCM}_{2}(q;\boldsymbol{w})  &=&  \frac{2\cdot 46}{3\cdot 50 + 2\cdot 46 + 2\cdot 4 + 2\cdot 1} = \frac{92}{252} \approx 0.3651, \\
		\text{\HCM}_{3}(q;\boldsymbol{w})  &=&  \frac{2\cdot 4}{3\cdot 50 + 2\cdot 46 + 2\cdot 4 + 2\cdot 1} = \frac{8}{252} \approx 0.0317, \\
		\text{\HCM}_{4}(q;\boldsymbol{w})  &=&  \frac{2\cdot 1}{3\cdot 50 + 2\cdot 46 + 2\cdot 4 + 2\cdot 1} = \frac{2}{252} \approx 0.0079.
	\end{eqnarray*}
	Therefore, the \HCM\ power index is 
	\begin{equation*}
		\text{\HCM}(q;\boldsymbol{w}) \approx (0.5953, 0.3651, 0.0317, 0.0079).
	\end{equation*}
\end{example}

The \HCM\ power index calculates the power of each player as the sum, on the minimal winning coalitions to which it belongs, of its contribution to the total weight of all minimal winning coalitions. The Colomer-Martínez power index, however, first calculates the power of each player within each of the minimal winning coalitions to which it belongs and then averages these powers.

Next example shows that the \HCM\ power index does not satisfy DPMw property.

\begin{example}\label{ejem:HCM_DPMw}
	Let $[q; \boldsymbol{w}]=[4;3,2,0]$ and $[q'; \boldsymbol{w}']=[4;3,0,1]$ which are WM-mergeable (see Example \ref{ejem:wgame}). Then, $[q^{\curlyvee};\boldsymbol{w}^{\curlyvee}]=[4;3,2,1]$ and 
	\[\HCM(4;3,2,1)=\left(\frac{2\cdot3}{2\cdot3+2+1},\frac{2}{2\cdot3+2+1} ,\frac{1}{2\cdot3+2+1} \right)=\left(\frac{6}{9},\frac{2}{9},\frac{1}{9}\right).\]
	Nevertheless,
	\begin{eqnarray*}
		\frac{|M(q;\boldsymbol{w})| \HCM(q; \boldsymbol{w}) + |M(q';\boldsymbol{w}')| \HCM(q'; \boldsymbol{w}')}{|M(q^{\curlyvee};\boldsymbol{w}^{\curlyvee})|}&=& \frac{1\cdot \left(\frac{3}{3+2+0},\frac{2}{3+2+0},0\right)+1\cdot\left(\frac{3}{3+0+1},0,\frac{1}{3+0+1}\right)}{2}\\
		&=&\left(\frac{27}{40},\frac{8}{40},\frac{5}{40}\right).
	\end{eqnarray*}
\end{example}

Note that if $p$ weighted majority games are WM-mergeable, with $p>1$, then, for each $i\in N$, $M_{i}(q^\curlyvee;\boldsymbol{w}^\curlyvee)=\bigcup_{k=1}^p M_i(q^k;\boldsymbol{w}^k)$ and $M_i(q^k;\boldsymbol{w}^k) \cap M_i(q^\ell;\boldsymbol{w}^\ell)=\emptyset$ for each pair $k,\ell\in\{1,\ldots,p\}$, and furthermore for all $k\in\{1,\ldots,p\}$ such that $M_i(q^k;\boldsymbol{w}^k)\neq \emptyset$ then $w^k_i=w^\curlyvee_i$. Therefore,  $|M_{i}(q^\curlyvee;\boldsymbol{w}^\curlyvee)|w^\curlyvee_i=\sum_{k=1}^p  |M_{i}(q^k;\boldsymbol{w}^k)|w^k_i$ for each $i\in N$ and thus the \HCM\ power index satisfies HCMw.

Using the HCMw property and others, the following characterization of the \HCM\ power index for weighted majority games is proposed. 

\begin{theorem}\label{theo:HC}
	The unique power index $f$ on $\SIW$ that satisfies EFF, NP, SYMw, and HCMw is the \HCM\ power index. 
\end{theorem}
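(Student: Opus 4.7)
The plan is to mirror the structure of the proof of Theorem \ref{teo:cmi}, splitting into existence and uniqueness, and exploiting the fact that HCMw plays exactly the role that DPMw did there.

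For existence, I would first verify that $\HCM$ satisfies EFF (by inspection of the denominator), NP (since a null player $i$ has $|M_i(q;\boldsymbol{w})|=0$), and SYMw (for any $[q;\boldsymbol{w}]$ with $M(q;\boldsymbol{w})=\{S\}$, every $i\in S$ has $|M_i(q;\boldsymbol{w})|=1$, so the ratio $\HCM_i/\HCM_j$ reduces to $w_i/w_j$). For HCMw, I would rely on the observation already made in the paragraph preceding the theorem: under WM-mergeability, for each player $i$, the family $\{M_i(q^k;\boldsymbol{w}^k)\}_{k=1}^{p}$ is a disjoint union equal to $M_i(q^\curlyvee;\boldsymbol{w}^\curlyvee)$, and whenever $M_i(q^k;\boldsymbol{w}^k)\neq\emptyset$ one has $w_i^k=w_i^\curlyvee$, so $|M_i(q^\curlyvee;\boldsymbol{w}^\curlyvee)|w_i^\curlyvee=\sum_{k=1}^p|M_i(q^k;\boldsymbol{w}^k)|w_i^k$; plugging this identity into the formula for $\HCM_i(q^\curlyvee;\boldsymbol{w}^\curlyvee)$ recovers exactly the right-hand side of HCMw.

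For uniqueness, let $f$ satisfy the four axioms. The first step handles games with a single minimal winning coalition. If $M(q;\boldsymbol{w})=\{S\}$, then every $i\notin S$ is null, so $f_i(q;\boldsymbol{w})=0$ by NP. By SYMw, $f_j(q;\boldsymbol{w})=(w_j/w_i)f_i(q;\boldsymbol{w})$ for all $i,j\in S$, and combining this with EFF pins down $f_i(q;\boldsymbol{w})=w_i/w_S=\HCM_i(q;\boldsymbol{w})$. This part is verbatim the argument used in Theorem \ref{teo:cmi} and needs no modification.

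The second step extends to an arbitrary $[q;\boldsymbol{w}]$ with $M(q;\boldsymbol{w})=\{S_1,\ldots,S_m\}$, $m>1$, using the same decomposition as in Theorem \ref{teo:cmi}: define $[q;\boldsymbol{w}^k]$ by $w_i^k=w_i$ if $i\in S_k$ or $i$ is null in $[q;\boldsymbol{w}]$, and $w_i^k=0$ otherwise. The paper's proof of Theorem \ref{teo:cmi} already establishes that these are WM-mergeable with WM-union game equal to $[q;\boldsymbol{w}]$ and that $M(q;\boldsymbol{w}^k)=\{S_k\}$. Applying HCMw to $f$ and then the first step to each $f_i(q;\boldsymbol{w}^k)$ yields
\begin{equation*}
f_i(q;\boldsymbol{w})=\frac{\sum_{k=1}^{m}f_i(q;\boldsymbol{w}^k)\sum_{j\in N}|M_j(q;\boldsymbol{w}^k)|w_j^k}{\sum_{j\in N}|M_j(q;\boldsymbol{w})|w_j}.
\end{equation*}
The key computation that makes the proof close is then to show that for each $k$, $\sum_{j\in N}|M_j(q;\boldsymbol{w}^k)|w_j^k=w_{S_k}$ (because $|M_j(q;\boldsymbol{w}^k)|=1$ exactly for $j\in S_k$, and $w_j^k=w_j$ there), while $f_i(q;\boldsymbol{w}^k)=w_i/w_{S_k}$ for $i\in S_k$ and $0$ otherwise; multiplying out, the numerator collapses to $|M_i(q;\boldsymbol{w})|\,w_i$, giving exactly $\HCM_i(q;\boldsymbol{w})$. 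The main obstacle is simply checking this bookkeeping step cleanly, since the HCMw denominator mixes weights and multiplicities in a way that DPMw did not; once that identity is in hand, the argument closes in the same way as Theorem \ref{teo:cmi}.
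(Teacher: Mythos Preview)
Your proposal is correct and follows exactly the approach the paper intends: the paper's own proof simply reads ``It immediately follows from a similar reasoning to that in Theorem \ref{teo:cmi},'' and what you have written is precisely that similar reasoning spelled out, including the bookkeeping identity $\sum_{j\in N}|M_j(q;\boldsymbol{w}^k)|w_j^k=w_{S_k}$ that collapses the HCMw numerator to $|M_i(q;\boldsymbol{w})|\,w_i$.
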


\begin{proof}
	It immediately follows from a similar reasoning to that in Theorem \ref{teo:cmi}.
\end{proof}

Next, the logical independence of the properties used in Theorem \ref{theo:HC} is shown.

\begin{description}
	\item [EFF:] Let $f$ be the power index defined, for all $[q; \boldsymbol{w}] \in \SIW$ and $i \in N$, by
	\begin{equation*}
		f_{i}(q; \boldsymbol{w}) =  2\cdot \HCM_i(q; \boldsymbol{w}).
	\end{equation*}
	
	It is clear that $f$ satisfies NP, SYMw, and HCMw, since $\HCM$ satisfies them, but $f$ does not satisfy EFF since, for all $[q; \boldsymbol{w}] \in \SIW$,  it holds that $\sum_{i \in N} f_{i}(q; \boldsymbol{w})=  \sum_{i \in N} 2\cdot \HCM_i(q; \boldsymbol{w})	= 2$ since $\HCM$ satisfies EFF.
	
	\item [NP:] Let $f$ be the power index defined, for all $[q; \boldsymbol{w}] \in \SIW$ and $i \in N$, by	
	\begin{equation*}
		f_{i}(q; \boldsymbol{w}) =  \begin{cases}
			\frac{1}{3} & \text{if } [q; \boldsymbol{w}]=[4;2,2,1],\\
			\HCM_i(q; \boldsymbol{w})& \text{otherwise.}
		\end{cases}
	\end{equation*}

	It is clear that $f$ satisfies the EFF property, since \HCM\ satisfies it. $f$ also satisfies the SYMw property. Namely, if $[q; \boldsymbol{w}]=[4;2,2,1]$, then $M(q; \boldsymbol{w}) = \{\{1,2\}\}$ and, for all $i,j \in \{1,2\}$ it holds that		
	\[
	\frac{f_{i}(4;2,2,1)}{f_{j}(4;2,2,1)} =  \frac{\frac{1}{3}}{\frac{1}{3}} = 1 = \frac{w_{i}}{w_{j}}. 
	\]
	
	On the other hand, for all $[q; \boldsymbol{w}]\neq [4;2,2,1]$, since $f(q; \boldsymbol{w})=\HCM(q; \boldsymbol{w})$ and the \HCM\ power index satisfies SYMw, it holds that
	\[
	\frac{f_{i}(q; \boldsymbol{w})}{f_{j}(q; \boldsymbol{w})} =  \frac{w_{i}}{w_{j}}. 
	\]
	
	Moreover, $f$ satisfies the HCMw property. 	We know from the previous section that $[4;2,2,1]$ cannot be the WM-union game of weighted majority games it may also not be WM-mergeable with other weighted majority games. Then $f$ satisfies HCMw since the \HCM\ power index satisfies it.
	
	Nevertheless, $f$ does not satisfy NP since $3$ is a null player of the weighted majority game $[4;2,2,1]$, but 
	\[
	f_{3}(4;2,2,1) = \frac{1}{3} \neq 0.
	\]

	\item [SYMw:] Let $f$ be the power index defined, for all $[q; \boldsymbol{w}] \in \SIW$ and $i \in N$, by	
	\begin{equation*}
		f_{i}(q; \boldsymbol{w}) =  \begin{cases}
			1 & \text{if } [q; \boldsymbol{w}]=[4;2,2,1] \text{ and } i=1,\\
			0 & \text{if } [q; \boldsymbol{w}]=[4;2,2,1] \text{ and } i\neq 1,\\
			\HCM_i(q; \boldsymbol{w})& \text{otherwise.}
		\end{cases}
	\end{equation*}
	$f$ satisfies the EFF property since \HCM\ satisfies it. $f$ also satisfies NP since \HCM\ satisfies it and $f_{3}(4;2,2,1) =  0$ being $3$ the unique null player of the weighted majority game $[4;2,2,1]$. Moreover, $f$ satisfies the HCMw property and is shown analogously to that performed with the power index used in the logical independence of the NP property.
	
	Nevertheless, $f$ does not satisfy the SYMw property. Namely, we know that $M(4;2,2,1) = \{\{1,2\}\}$ and it holds that		
	\[
	\frac{f_{2}(4;2,2,1)}{f_{1}(4;2,2,1)} =  \frac{0}{1} = 0 \neq \frac{w_{i}}{w_{j}}. 
	\]

	\item [HCMw:] It has been shown in the previous section that the Colomer-Martínez power index satisfies EFF, NP, and SYMw properties. Nevertheless, it does not fulfill HCMw, or else Theorem \ref{theo:HC} fails. 
\end{description}

\section{The National Assembly of Ecuador}\label{sec:ejem}

The National Assembly of Ecuador (Asamblea Nacional de Ecuador) has 137 assembly members (MPs from now on). Ecuador held general elections in February 2021.\footnote{\url{https://www.primicias.ec/noticias/politica/los-cambios-en-las-bancadas-de-la-asamblea/}, last accessed 23/12/2021.} The Assembly was composed of:\footnote{The numbers in parentheses represent the number of MPs of each political party.} (49) UNES, (27) MUPP, (18) ID, (18) PSC, (12) CREO, and the minorities (IND): (2) AVA, (2) MEU, (2) AH, (1) PSP, (1) AU, (1) MAP, (1) MUE, (1) MMI, (1) MAE, and (1) DEMSI. The CREO and PSC parties are in favor of the free market. The ID and UNES parties are progressive parties. Finally, the MUPP party is closer to socialism. The current president of Ecuador belongs to the CREO party. 

The so-called legislative benches are the official political groupings within the Assembly, with the right to have authority within the different legislative commissions. Table \ref{Tb01} and Figure \ref{Fig00} summarize the number of MPs on each legislative bench.

\begin{table}[!h]
	\begin{center}
		\begin{tabular}{ r|c }
			\hline
			\textbf{Benches} & \textbf{Assembly members} \\ \hline
			UNES    &        49        \\
			MUPP    &        27        \\
			ID      &        18        \\
			PSC     &        18        \\
			CREO    &        12        \\
			IND     &        13        \\ \hline
		\end{tabular}
	
	\caption{Composition of the National Assembly of Ecuador in May 2021.}	\label{Tb01}
	\end{center}
	
\end{table}

\begin{figure}[ht]
	\centering
	\includegraphics[width=0.7\linewidth]{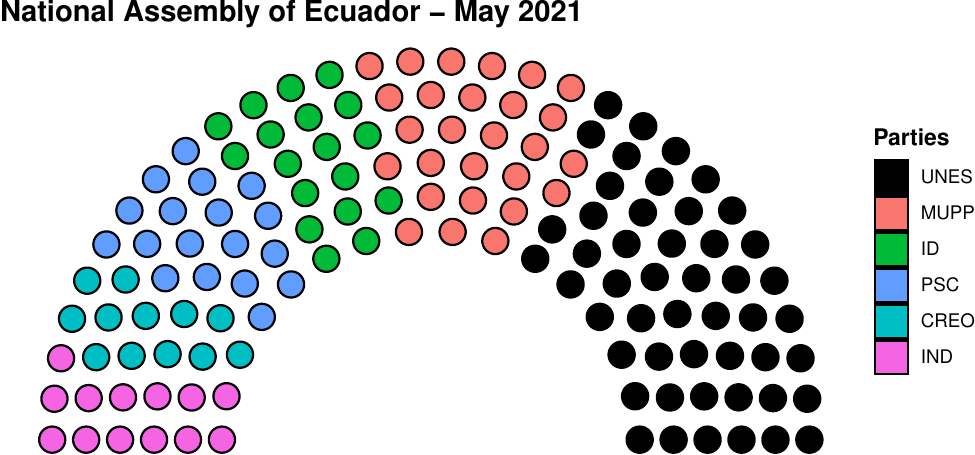} 
	
	\caption{Legislative benches of the National Assembly of Ecuador in May 2021.}\label{Fig00}
\end{figure}

During a legislature and due to splits within the legislative benches, some parliamentary groups may disappear, but new parliamentary groups may also be formed. This fact raises the interest in analyzing the evolution of power within this Assembly throughout a legislature. In June 2021, the party closely aligned to the government (CREO), and some MPs from other parties, mainly minorities, consolidated the new legislative bench (25) BAN. Moreover, some other MPs declared themselves independent: (9) IND. Figure \ref{Fig01} shows the redistribution of MPs in the legislative benches as of June 2021.

\begin{figure}[ht]
	\centering
	\includegraphics[scale=0.8]{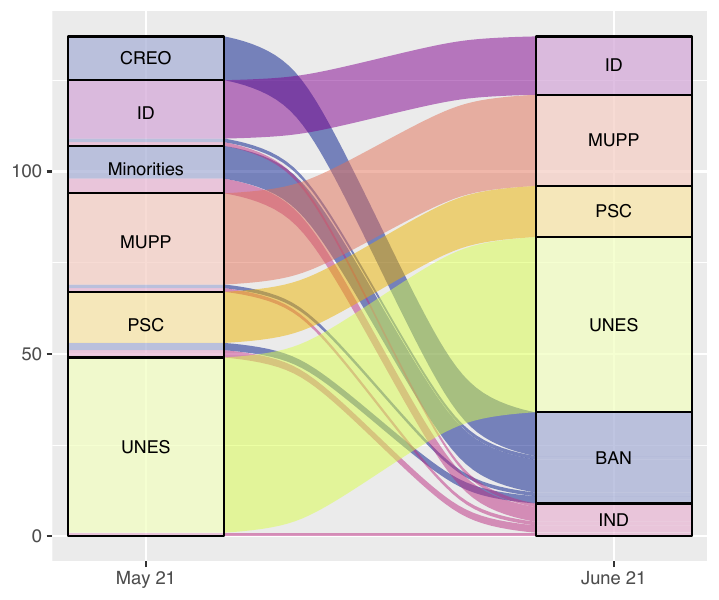} 
	
	\caption{Changes in the legislative benches of the National Assembly of Ecuador. May-June 2021.} \label{Fig01}
\end{figure}

There have been changes in the structure of the legislative benches of the Assembly during the first seven months of its operation. These changes are shown in Table \ref{Tb02} and correspond to June 2021, July 2021\footnote{\url{https://www.primicias.ec/noticias/politica/bancadas-pierden-miembros-votos-asamblea/}, last accessed 23/12/2021.}, 12 October 2021\footnote{\url{https://www.primicias.ec/noticias/politica/union-unes-pachakutik-debilidad-legislativa-gobierno/}, last accessed 23/12/2021
}, 26 October 2021\footnote{\url{https://www.primicias.ec/noticias/politica/posible-destitucion-lasso-apoyo-asamblea/}, last accessed 23/12/2021}, and December 2021\footnote{\url{https://www.primicias.ec/noticias/politica/ruptura-pachakutik-capitulo-bancadas-desgranads/}, last accessed 23/12/2021
}.

\begin{table}
	\begin{center}
		\begin{tabular}{ r|ccccc}
				\hline
				\textbf{Benches} & \textbf{Jun 21} & \textbf{Jul 21} & \textbf{12 Oct 21} & \textbf{26 Oct 21} & \textbf{Dec 21} \\ \hline
				UNES    & 48     &   47   & 47        & 47        & 47     \\
				MUPP    & 25     &   24   & 25        & 25        & 25     \\
				BAN     & 25     &   25   & 25        & 26        & 28     \\
				ID      & 16     &   16   & 14        & 14        & 14     \\
				PSC     & 14     &   14   & 14        & 14        & 14     \\
				IND     & 9      &   11   & 12        & 11        & 9      \\ \hline
		\end{tabular}
	
	\caption{Changes in the legislative benches of the National Assembly of Ecuador. June-December 2021.}
	\label{Tb02}	
	\end{center}
\end{table}

The Assembly can be analyzed using weighted majority games where the quota q is\footnote{One would expect a quota equal to $69$. However, according to the electoral law of Ecuador (art. 8 of the Organic Law of the Legislative Function as amended on November 10, 2020), the simple majority shall be understood as a favorable vote of half plus one of the MPs. Moreover, this law also indicates that if the result of the latter calculation is not an integer number, then the number of votes required is the nearest higher integer number. Therefore, out of $137$ MPs, half plus one would be $69.5$, and hence the quota would be $70$.} $70$ and the vectors of weights  $\boldsymbol{w}$ are given by the number of MPs that each legislative bench has in the different compositions of the Assembly. For the Assembly in May 2021, it can be represented as the following weighted majority game: 
\begin{equation*}
	[q;\boldsymbol{w}] = [70; 49, 27, 18, 18, 12, 13],
\end{equation*}
with $N = \{$UNES, MUPP, ID, PSC, CREO, IND$\}$. Table \ref{Tb04.1} shows the 11 minimal winning coalitions in this weighted majority game.

\begin{table}[ht]
	\begin{center}
		\begin{tabular}{ l }
				\hline
				\textbf{Minimal winning coalitions} \\ \hline
				\{UNES, MUPP\}             \\
				\{UNES, ID, PSC\}          \\
				\{UNES, ID, CREO\}         \\
				\{UNES, ID, IND\}          \\
				\{UNES, PSC, CREO\}        \\
				\{UNES, PSC, IND\}         \\
				\{UNES, CREO, IND\}        \\
				\{MUPP, ID, PSC, CREO\}    \\
				\{MUPP, ID, PSC, IND\}     \\
				\{MUPP, ID, CREO, IND\}    \\
				\{MUPP, PSC, CREO, IND\}   \\ \hline
		\end{tabular}
	
	\caption{Minimal winning coalitions for the weighted majority game associated with National Assembly of Ecuador in May 2021.} \label{Tb04.1}	
	\end{center}
\end{table}

Looking at the minimal winning coalitions in Table \ref{Tb04.1}, one can see that this composition of the Assembly has no null players. Note that the same holds for all Assembly configurations in 2021. One can also see that, in May 2021, UNES is the legislative bench with the most power in the Assembly, in the sense that it belongs to 7 out of 11 minimal winning coalitions. The legislative bench of the president of Ecuador, CREO, also belongs to 3 of these 7 minimal winning coalitions. Nevertheless, these coalitions will be practically unfeasible due to the incompatibility of political ideas between these political parties. Therefore in practice, CREO can only participate in minimal winning coalitions of cardinality equal to 4. This fact may be one of the reasons why the Government has to make a great effort to achieve the approval of its laws in the Assembly.

Table \ref{Tb03} shows the power that each of the power indices discussed in this article assigns to each legislative bench in the May 2021 Assembly composition. All of them give the most power in the Assembly to UNES. Moreover, CREO is one of the legislative benches that obtains the least power with any of the power indices used. Note also that the Shapley-Shubik, Deegan-Packel, and Public Good power indices are the ones that show a more homogeneous distribution of power among the legislative benches of ID, PSC, CREO, and IND since these are symmetric players. On the other hand, the differences with respect to the Colomer-Martínez and \HCM\ power indices are due to the use in their calculation of the number of MPs (weights) that each legislative bench has. It can also be observed that the Colomer-Martínez and \HCM\ power indices distribute total power in a similar way to the Shapley-Shubik power index.

\begin{table}[ht]
	\centering
	
	\begin{tabular}{ r| c  c  c  c  c  c }
		\hline
		\textbf{Power indices} & \textbf{UNES} & \textbf{MUPP} & \textbf{ID} & \textbf{PSC} & \textbf{CREO} & \textbf{IND} \\ \hline
		\ShSh                 & 0.4000        & 0.2000        & 0.1000      & 0.1000       & 0.1000        & 0.1000       \\
		\DP                   & 0.2273        & 0.1364        & 0.1591      & 0.1591       & 0.1591        & 0.1591       \\
		\PG                   & 0.1944        & 0.1389        & 0.1667      & 0.1667       & 0.1667        & 0.1667       \\
		\CM                   & 0.3954        & 0.1675        & 0.1271      & 0.1271       & 0.0881        & 0.0948       \\
		\HCM                  & 0.4064        & 0.1600        & 0.1280      & 0.1280       & 0.0853        & 0.0924       \\ \hline
	\end{tabular}

\caption{Power indices for the weighted majority game associated with National Assembly of Ecuador in May 2021.} \label{Tb03}
\end{table}

Finally, the changes that occur in the distribution of power of the legislative benches due to the movements in the composition of the Assembly (see Table \ref{Tb02}) are analyzed. Consider each of the legislative bench structures in Table \ref{Tb02} as a weighted majority game. All these games have the same set of minimal winning coalitions (Table \ref{Tb04}). One can observe that UNES remains the legislative bench with the most power in the Assembly in the sense that it belongs to 5 out of 8 minimal winning coalitions. CREO, now integrated into BAN, increases its power compared to its initial condition in May 2021, given that BAN belongs to half of the minimal winning coalitions. Moreover, CREO also reduces its dependence on UNES, only present now in 1 out of 4 minimal winning coalitions to which BAN belongs.
\begin{table}[ht]
	\begin{center}
		\begin{tabular}{ l }
			\hline
			\textbf{Minimal winning coalitions} \\ \hline
			\{UNES, MUPP\}             \\
			\{UNES, BAN\}              \\
			\{UNES, ID, PSC\}          \\
			\{UNES, ID, IND\}          \\
			\{UNES, PSC, IND\}         \\
			\{MUPP, BAN, ID, PSC\}     \\
			\{MUPP, BAN, ID, IND\}     \\
			\{MUPP, BAN, PSC, IND\}    \\ \hline
		\end{tabular}
		
		\caption{Minimal winning coalitions for the weighted majority games associated with National Assembly of Ecuador in the period June-December 2021.}
		\label{Tb04}
	\end{center}
\end{table}

The Shapley-Shubik, Deegan-Packel, and Public Good power indices are not affected by the changes in the legislative bench structure given in Table \ref{Tb02}. This is because only the minimal winning coalitions, which did not change over time, are taken into account in the calculation of these	power indices. Table \ref{Tb05a} shows the distribution of power provided by these indices. Here, one can see that UNES benefits from these changes, as the power assigned by these indices is higher than in the initial situation in May 2021. However, the only power index that gives BAN (where CREO is integrated) more power than in the initial composition is the Shapley-Shubik power index, despite the fact that BAN now belongs to more minimal winning coalitions and decreases its dependence on UNES to form minimal winning coalitions. It is also important to note that the Shapley-Shubik power index gives equal power to the MUPP and BAN legislative benches. The ID, PSC and IND legislative benches are also given equal power. On the other hand, since the MUPP and BAN legislative benches participate in the same number of minimal winning coalitions and with the same cardinality, the Deegan-Packel power index also gives them equal power. The same holds for the legislative benches ID, PSC, and IND. Moreover, the Public Good power index gives the same power to all legislative benches except UNES since its calculation depends on the number of minimal winning coalitions and not on their cardinality.
\begin{table}
	\centering
	
	\begin{tabular}{r|cccccc}
		\hline
		              &             \multicolumn{6}{c}{Parties}             \\
		Power indices &  UNES  &  MUPP  &  BAN   &   ID   &  PSC   &  IND   \\ \hline
		        \ShSh & 0.4667 & 0.1667 & 0.1667 & 0.0667 & 0.0667 & 0.0667 \\
		          \DP & 0.2500 & 0.1562 & 0.1562 & 0.1458 & 0.1458 & 0.1458 \\
		          \PG & 0.2000 & 0.1600 & 0.1600 & 0.1600 & 0.1600 & 0.1600 \\ \hline
	\end{tabular}		
	
	\caption{Shapley-Shubik, Deegan-Packel, and Public Good power indices for the weighted majority games associated with National Assembly of Ecuador in the period June-December 2021.}
		\label{Tb05a}
\end{table}

On the other hand, as is shown in Table \ref{Tb05b}, the Colomer-Martínez and \HCM\ power indices are sensitive to changes in the structure of the legislative benches. Comparing these power allocation results with those obtained for the initial composition of the Assembly in May 2021, contrary to what happens with the power indices shown in Table \ref{Tb05a}, the power assigned to UNES almost does not change, while BAN doubles its power.

In conclusion, despite the movements that have occurred in the seven months of operation, UNES always gets the most power within the Assembly regardless of the power index used.

\begin{table}
	\centering
	
	\begin{tabular}{r r| c  c  c  c  c }
		\hline
		        \textbf{Parties} & \textbf{Power indices} & \textbf{Jun 21} & \textbf{Jul 21} & \textbf{12 Oct 21} & \textbf{26 Oct 21} & \textbf{Dec 21} \\ \hline
		\multirow{2}{1 cm}{UNES} & \CM                    & 0.4080          & 0.4016          & 0.4025             & 0.4036             & 0.4061          \\ 
		                         & \HCM                   & 0.4027          & 0.3950          & 0.3950             & 0.3950             & 0.3950          \\ \hline
		\multirow{2}{1 cm}{MUPP} & \CM                    & 0.1663          & 0.1602          & 0.1657             & 0.1652             & 0.1642          \\
		                         & \HCM                   & 0.1678          & 0.1613          & 0.1681             & 0.1681             & 0.1681          \\ \hline
		 \multirow{2}{1 cm}{BAN} & \CM                    & 0.1663          & 0.1663          & 0.1657             & 0.1712             & 0.1820          \\
		                         & \HCM                   & 0.1678          & 0.1681          & 0.1681             & 0.1748             & 0.1882          \\ \hline
		  \multirow{2}{1 cm}{ID} & \CM                    & 0.1047          & 0.1046          & 0.0928             & 0.0928             & 0.0930          \\
		                         & \HCM                   & 0.1074          & 0.1076          & 0.0941             & 0.0941             & 0.0941          \\ \hline
		 \multirow{2}{1 cm}{PSC} & \CM                    & 0.0929          & 0.0928          & 0.0928             & 0.0928             & 0.0930          \\
		                         & \HCM                   & 0.0940          & 0.0941          & 0.0941             & 0.0941             & 0.0941          \\ \hline
		 \multirow{2}{1 cm}{IND} & \CM                    & 0.0617          & 0.0744          & 0.0806             & 0.0744             & 0.0617          \\
		                         & \HCM                   & 0.0604          & 0.0739          & 0.0807             & 0.0739             & 0.0605          \\ \hline
	\end{tabular}
	
	\caption{Colomer-Martínez and \HCM\ power indices for the weighted majority games associated with National Assembly of Ecuador in the period June-December 2021.}
	\label{Tb05b}
\end{table}

\section{Conclusions and future work }\label{sec:conclusions}
In this paper, a definition of mergeability of weighted majority games is proposed. The first characterization of the Colomer-Martínez power index in the class of weighted majority games using mergeability and symmetry properties is also provided. Moreover, combining the ideas of Public Good and Colomer-Martínez power indices, a new power index for weighted majority games is also introduced, and one characterization is provided. Table \ref{cuadro06} shows a comparative summary of the characterizations of the power indices discussed in this paper. The last two columns are the characterizations proposed in this paper.
	\begin{table}[ht]
	\begin{center}
		\begin{tabular}{r|ccccc}
			\hline
			\multicolumn{1}{r|}{\textbf{Property}} & \textbf{\ShSh} & \textbf{\DP} & \textbf{\PG} & \textbf{\CM} & \textbf{\HCM} \\ \hline
			Efficiency &   \checkmark   &  \checkmark  &  \checkmark  &  \checkmark  &  \checkmark   \\
			Symmetry &   \checkmark   &  \checkmark  &  \checkmark  &              &               \\
			Null player &   \checkmark   &  \checkmark  &  \checkmark  &  \checkmark  &  \checkmark   \\
			Transfer &   \checkmark   &              &              &              &               \\
			\DP\ Mergeability &                &  \checkmark  &              &              &               \\
			\PG\ Mergeability &                &              &  \checkmark  &              &               \\
			Weighted symmetry &                &              &              &  \checkmark  &  \checkmark   \\
			\DP\ weighted mergeability &                &              &              &  \checkmark  &               \\
			\HCM\ weighted mergeability &                &              &              &              &  \checkmark   \\ \hline
		\end{tabular}
		
		\caption{Overview of the characterizations of the power indices.}\label{cuadro06}
	\end{center}
	
\end{table}

For future work, it would be interesting to propose generalizations of the Colomer-Martínez and \HCM\ power indices for games with a priori unions, consortia, or restricted communication. Moreover, it would also be interesting to study some methods to calculate these power indices.

\section*{Acknowledgment}
We would like to thank Balbina V. Casas-Méndez and two anonymous referees for their valuable comments. This work is part of the R+D+I project grants MTM2017-87197-C3-2-P, MTM2017-87197-C3-3-P, PID2021-124030NB-C32, and PID2021-124030NB-C33, that were funded by \newline MCIN/AEI/10.13039/501100011033/ and by ``ERDF A way of making Europe''/EU. This research was also funded by Grupos de Referencia Competitiva ED431C-2020/03 and ED431C-2021/24 from the Consellería de Cultura, Educación e Universidades, Xunta de Galicia.

\bibliography{biblio}
\bibliographystyle{spbasic2}
\end{document}